\newtheorem{theorem}{Theorem}
\newtheorem{lemma}{Lemma}
\theoremstyle{definition}
\newtheorem{definition}{Definition}
\theoremstyle{remark}
\newtheorem*{rep@theorem}{\rep@title}
\newcommand{\newreptheorem}[2]{%
\newenvironment{rep#1}[1]{%
 \def\rep@title{#2 \ref{##1}}%
 \begin{rep@theorem}}%
 {\end{rep@theorem}}}
\newcommand{\figref}[1]{Figure~\ref{fig:#1}}
\newcommand{\secref}[1]{Section~\ref{sec:#1}}
\newcommand{\thmref}[1]{Theorem~\ref{thm:#1}}
\newcommand{\eqnref}[1]{\eqref{eqn:#1}}
\DeclareMathOperator{\diag}{diag}
\DeclareMathOperator{\sign}{sign}
\DeclareMathOperator*{\argmin}{arg\,min}
\newcommand{\iid}[0]{i.i.d.\xspace}
\newcommand{\norm}[1]{\lVert{#1}\rVert}
\newcommand{\EE}[1]{\mathbb{E}\left[{#1}\right]} 
\def\R{\mathbb{R}}
\newcommand{\ident}{\mathbf{I}}
\newcommand{\iidsim}{\stackrel{\mathrm{iid}}{\sim}}
\newcommand{\ignore}[1]{}
\newcommand{\eps}{\epsilon}
\newcommand{\bbSig}{\Sigma\hspace{-3.5pt}{\color{white}1}\hspace{-7pt}\Sigma}
\newcommand{\normal}{\mathcal{N}}
\newcommand{\fronorm}[1]{\norm{{#1}}_{\text{Fro}}}
\newcommand{\grpnorm}[1]{\norm{{#1}}_{\text{group}}}
\newcommand{\Xp}{\widetilde{X}}
\newcommand{\bbX}{\mathbb{X}}
\newcommand{\bbXp}{\widetilde{\bbX}}
\newcommand{\fdr}{\textnormal{FDR}}
\newcommand{\mfdr}{\textnormal{mFDR}}
\newcommand{\gfdr}{\fdr_{\textnormal{group}}}
\newcommand{\mgfdr}{\mfdr_{\textnormal{group}}}
\newcommand{\fdp}{\textnormal{FDP}}
\newcommand{\fdph}{\widehat{\fdp}}
\newcommand{\Sh}{\widehat{S}}
\newcommand{\St}{\widetilde{S}}
\newcommand{\betah}{\widehat{\beta}}
\newcommand{\tlam}{\widetilde{\lambda}}
\newcommand{\hlam}{\widehat{\lambda}}
\begin{document}

\begin{frontmatter}

\title{The knockoff filter for FDR control  in group-sparse and multitask regression}
\runtitle{The knockoff filter for FDR control  in group-sparse and multitask regression}


\author{\fnms{Ran} \snm{Dai}\ead[label=e1]{randai@uchicago.edu}}
\address{\printead{e1}}
\and
\author{\fnms{Rina Foygel} \snm{Barber}\ead[label=e2]{rina@uchicago.edu}}
\address{\printead{e2}}
\affiliation{Department of Statistics, The University of Chicago, Chicago IL 60637 USA}

\runauthor{R. Dai and R.F. Barber}

\begin{abstract}
We propose the group knockoff filter,
 a method for false discovery rate control in a linear regression setting
where the features are grouped, and we would like to select a set of relevant groups
which have a nonzero effect on the response. By considering the set of true 
and false discoveries at the group level, this method gains power relative to 
sparse regression methods. We also apply our method to the multitask regression
problem where multiple response variables share similar sparsity patterns
across the set of possible features. Empirically, the group knockoff filter successfully
controls false discoveries at the group level in both settings, with substantially
more discoveries made by leveraging the group structure.
\end{abstract}

\begin{keyword}[class=MSC]
\kwd[Primary ]{62F03}
\kwd{62J05}
\end{keyword}

\begin{keyword}
\kwd{Knockoffs}
\kwd{False Discovery Rate}
\kwd{Linear Regression}
\kwd{Group Lasso}
\kwd{Multi-task Learning}
\end{keyword}

\end{frontmatter}

\section{Introduction}
In a high-dimensional regression setting, we are faced with many potential explanatory variables (features),
often with most of these features having zero or little true effect on the response. Model
selection methods can be applied to find a small submodel containing the most relevant features,
for instance, via
sparse model fitting methods such as the lasso \cite{tibshirani1996regression}, 
or in a setting where the sparsity respects a grouping of the features, the group lasso
\cite{yuan2006model}. In practice, however, we may not be able to determine whether 
the set of features (or set of groups of features) selected might contain many false positives.
For the (non-grouped) sparse setting, the knockoff filter \cite{barber2015} creates ``knockoff copies''
of each variable to act as a control group, detecting whether the lasso (or another 
model selection method) is successfully controlling the false discovery rate (FDR), and tuning
this method to find a model as large as possible while bounding FDR. In this work,
we will extend the knockoff filter to the group sparse setting, and will find that
by considering features, and constructing knockoff copies, at the group-wise level,
we are able to improve the power of this method at detecting true signals. Our method
can also extend to the multitask regression setting \cite{obozinski2006multi}, where multiple
responses exhibit a shared sparsity pattern when regressed on a common set of features.
As for the knockoff method, our work applies to the setting where $n\geq p$.

\section{Background}
We begin by giving background on several models and methods underlying our work.
\subsection{Group sparse linear regression}
We consider a linear regression model,
$Y=X\beta+z$,
where $y\in \R^{n}$ is a vector of responses and $X\in\R^{n\times p}$ is a known design matrix. 
In a grouped setting, the $p$ features are partitioned into $m$ groups of variables, $G_1,\dots,G_m\subseteq \{1,\dots,p\}$, with group sizes $p_1,\cdots,p_m$. The noise distribution is assumed to be $z\sim\normal(0,\sigma^2\ident_n)$. We assume sparsity structure in that only a small portion of $\beta_{G_i}$'s are nonzero, where $\beta_{G_i}\in\R^{p_i}$ is the subvector of $\beta$ corresponding to the $i$th group of features. When not taking group into consideration, a commonly used method to find a sparse vector of coefficients $\beta$ is the lasso \cite{tibshirani1996regression}, an $\ell_1$-penalized linear regression, which minimizes  the following objective: function
 \begin{equation}\label{eqn:lasso}\betah (\lambda) = \argmin_{\beta} \left\{\norm{y - X \beta}_2^2 +\lambda \norm{\beta}_1\right\}\;.\end{equation}
To utilize the feature grouping, so that an entire group of features is selected simultaneously, \citet{yuan2006model} proposed following grouped lasso penalties:
 \begin{equation}\label{eqn:glasso}\betah (\lambda) = \argmin_{\beta} \left\{\norm{y - X \beta}_2^2 +\lambda\grpnorm{\beta}\right\}\;.\end{equation}
 where $\grpnorm{\beta}=\sum_{i=1}^m \norm{\beta_{G_i}}_2$.
This penalty promotes sparsity at the group level; for large $\lambda$, few groups will be selected (i.e.~$\beta_{G_i}$ will be zero for many groups),
but within any selected group, the coefficients will be dense (all nonzero). The $\ell_2$ norm penalty on $\beta_{G_i}$ may sometimes be rescaled relative to the size of the group.

\subsection{Multitask learning}
In a multitask learning problem with a linear regression model, we consider the model
\begin{equation}\label{eqn:multitask_model}Y = X B + E\end{equation}
where the response $Y\in\R^{n\times r}$ contains $r$ many response variables measured for $n$ individuals,
$X\in\R^{n\times p}$ is the design matrix,  $B\in\R^{p\times r}$ is the coefficient matrix, and $E\in\R^{n\times r}$ is the error
matrix, for which we assume a Gaussian model: its rows $e_i$, 
for $i=1,\dots,n$, are \iid draws from a zero-mean Gaussian,
$e_n \iidsim \normal(0,\Sigma)$,
with unknown covariance structure $\Sigma\in\R^{r\times r}$. If the  number of features $p$ is large, we may believe that
only a few of the features are relevant; in that case, most rows of $B$ will be zero---that is, $B$ is row-sparse.

In a low-dimensional setting, we may consider the multivariate normal model, with likelihood
determined by both the coefficient matrix $B$ and the covariance matrix $\Sigma$.
in a high-dimensional setting, combining this likelihood with a sparsity-promoting penalty may be computationally challenging, and so a common approach
is to ignore the covariance structure of the noise and to simply use a least-squares loss together with a penalty,
\begin{equation}\label{eqn:multitask_lasso}\widehat{B} = \argmin_B \left\{\frac{1}{2}\fronorm{Y - X B}^2 +\lambda \norm{B}_{\ell_1/\ell_2}\right\}\;,\end{equation}
where $\fronorm{\cdot}$ is the Frobenius norm,
and where the $\ell_1/\ell_2$ norm in the penalty is given by
$\norm{B}_{\ell_1/\ell_2} = \sum_i \sqrt{\sum_j B_{ij}^2}$.
This penalty promotes row-wise sparsity of $B$: for large $\lambda$, $\widehat{B}$ will have many zero rows, however the nonzero rows will
themselves be dense (no entry-wise sparsity).

It is common to reformulate this $\ell_1$-penalized multitask linear regression as a group lasso problem. First, we reorganize the terms in our model. We
form a vector response $y\in\R^{nr}$ by stacking the columns of $Y$:
 \[y=\textnormal{vec}(Y)
  = (Y_{11}, \dots, Y_{n1}, \dots, Y_{1r},\dots, Y_{nr})^\top\in\R^{nr},\]
and a new larger design matrix by repeating $X$ in blocks:
\[\bbX = \ident_r\otimes X =  \left(\begin{array}{cccc} X & 0 & \dots & 0 \\ 0 & X & \dots & 0 \\ && \dots & \\ 0 & 0 & \dots & X\end{array}\right) \in\R^{nr \times pr}.\]
(Here $\otimes$ is the Kronecker product.)
Define the coefficient vector $\beta = \textnormal{vec}(B)\in\R^{pr}$ and  noise vector $\eps = \textnormal{vec}(E)\in\R^{nr}$.
Then the  multitask model~\eqnref{multitask_model} can be rewritten as
\begin{equation}\label{eqn:multitask_vector_model}y = \bbX\beta + \eps,\end{equation}
where $\eps$ follows a Gaussian model, $\eps\sim\normal(0,\bbSig)$, for
\[{\bbSig} = \Sigma\otimes\ident_n = \left(\begin{array}{ccc}\Sigma_{11}\ident_n  & \dots & \Sigma_{1r}\ident_n \\  \dots &  \dots &  \dots \\ \Sigma_{r1}\ident_n &  \dots & \Sigma_{rr}\ident_n\end{array}\right).\]
The group sparse structure of $\beta$ is determined by groups
\[G_j  = \{ j, j+p, \dots,j+ (r-1)p\}\]
for $j=1,\dots,p$; this corresponds to the row sparsity of $B$ in the original formulation~\eqnref{multitask_model}.
Then, the multitask learning problem has been reformulated into a group-sparse regression problem---and so,
the multitask lasso~\eqnref{multitask_lasso} can equivalently be solved by the group lasso optimization problem
\begin{equation}\label{eqn:multitask_as_group_lasso}\betah = \argmin_{\beta}\left\{\frac{1}{2} \norm{ y - \bbX\beta}^2_2 + \lambda\grpnorm{\beta}\right\}\;.\end{equation}

\subsection{The group false discovery rate}
The original definition of false discovery rate (FDR) is the expected proportion of incorrectly selected features among all selected features.
When the group rather than individual feature is of interest, we prefer to control the false discovery rate at the group level. Mathematically, we define the group false discovery rate ($\gfdr$) 
as
\begin{equation}
\gfdr=\EE{\frac{\#\{i:\beta_i=0,i\in \Sh\}}{\#\{i:i\in \Sh\} \vee 1}}
\end{equation}
the expected proportion of selected groups which are actually false discoveries. Here $\Sh=\{i:\betah_i\neq 0\}$ is the set of all selected group of features, while $a\vee b$ denotes $\max\{a,b\}$.

\subsection{The knockoff filter for sparse linear regression}
In the sparse (rather than group-sparse) setting, the lasso~\eqnref{lasso} provides an accurate estimate for the coefficients in a sparse linear model,
but performing inference on the results, for testing the accuracy of these estimates or the set of features selected, remains a challenging problem.
The knockoff filter \cite{barber2015} addresses this question, and provides a method controlling the false discovery rate (FDR) of the selected set
at some desired level $q$ (e.g.~$q=0.2$).

To run this method, there are two main steps: constructing knockoffs, and filtering the results.
First a set of $p$ knockoff features is constructed: for each feature $X_j$, $j=1,\dots,p$, it is given a knockoff copy $\Xp_j$, where the matrix
of knockoffs $\Xp = [\Xp_1 \ \dots \ \Xp_p]$ satisfies, for some vector $s\geq 0$,
\begin{equation}\label{eqn:knockoff_condition}
\Xp^\top \Xp = X^\top X, \ \Xp^\top X = X^\top X - \diag\{s\}.\end{equation}
Next, the lasso is run on an augmented data set with response $y$ and $2p$ many features $X_1,\dots,X_p,\Xp_1,\dots,\Xp_p$:
\[\betah(\lambda) = \argmin_{b\in\R^{2p}} \left\{\norm{y - [ X \ \Xp ] b}_2^2 +\lambda \norm{\beta}_1\right\}.\]
This is run over a range of $\lambda$ values decreasing from $+\infty$ (a fully sparse model) to $0$ (a fully dense model).
If $X_j$ is a true signal---that is, it has a  nonzero effect on the response $y$---then this should be evident
in the lasso: $X_j$ should enter the model earlier (for larger $\lambda$) than its knockoff copy $\Xp_j$. However,
if $X_j$ is null---that is, $\beta_j=0$ in the true model---then it is equally likely to enter before or after $\Xp_j$.

Next, to filter the results, let $\lambda_j$ and $\tlam_j$ be the time of entry into the lasso path for each feature and knockoff:
\[\lambda_j = \sup\{\lambda : \betah(\lambda)_j\neq 0\}, \tlam_j = \sup\{\lambda:\betah(\lambda)_{j+p}\neq 0\},\]
and let $\Sh(\lambda), \St(\lambda)\subseteq\{1,\dots,p\}$ be the sets of original features, and knockoff features, which have entered
the lasso path before time $\lambda$, and before their counterparts:
\[\Sh(\lambda) = \{j:\lambda_j > \tlam_j\vee\lambda\}\text{ and }\St(\lambda) =\{j:\tlam_j > \lambda_j\vee\lambda\}.\]
Estimate the proportion of false discoveries in $\Sh(\lambda)$ as
\begin{equation}\label{eqn:fdphat}
\fdp(\lambda)\approx\fdph(\lambda) =  \frac{|\St(\lambda)|}{|\Sh(\lambda)|\vee 1}.\end{equation}
To understand why, note that since $X_j$ and $\Xp_j$ are equally likely to enter in either order if $X_j$ is null (no real effect),
then $j$ is equally likely to fall into either $\Sh(\lambda)$ or $\St(\lambda)$. Therefore, the numerator $|\St(\lambda)|$ should 
be an (over)estimate of the number of nulls in $\Sh(\lambda)$---thus, the ratio estimates the FDP.
Alternately, we can choose a more conservative definition
\begin{equation}\label{eqn:fdphatplus}
\fdp(\lambda)\approx\fdph_+(\lambda) =  \frac{1+|\St(\lambda)|}{|\Sh(\lambda)|\vee 1}.\end{equation}

Finally, the knockoff filter selects $\hlam = \min\{\lambda:\fdph(\lambda)\leq q\}$, where $q$
is the desired bound on FDR level,  and then
outputs the set $\Sh(\hlam)$ as the set of ``discoveries''.
The knockoff+ variant does the same with $\fdph_+(\lambda)$. 
Theorems 1 and 2 of \cite{barber2015} prove that the knockoff procedure
bounds a modified form of the FDR, $\mfdr = \EE{\frac{(\text{\# of false discoveries})}{(\text{\# of discoveries}) + q^{-1}}}$, while the knockoff+
procedure bounds the FDR.

\section{The knockoff filter for group sparsity}\label{sec:knockoff_background}
In this section, we extend the knockoff method to the group sparse setting. This involves two key modifications:
the construction of the knockoffs at a group-wise level rather than for individual features, and the ``filter'' step where 
the knockoffs are used to select a set of discoveries.
Throughout the remainder of the paper, ``knockoff'' refers to the original knockoff method, while ``group knockoff'
(or, later on, ``multitask knockoff'') refers to our new method.

\subsection{Group knockoff construction}
The original knockoff construction requires that $\Xp^\top X = X^\top X - \diag\{s\}$, that is, all off-diagonal entries are equal. When the features are highly correlated, 
this construction is only possible for vectors $s$ with extremely small entries; that is, $\Xp_j$ and $X_j$ are themselves highly correlated, and the knockoff filter
then loses power as it is hard to distinguish between a real signal $X_j$ and its knockoff copy $\Xp_j$. 

In a group-sparse setting, we will see that we can relax this requirement on $\Xp^\top X$, thereby improving our power. In particular, the best gain will be in 
situations where within-group correlations are high but between-group correlations are low; this may arise in many applications, for example, when genes related
to the same biological pathways are grouped together, we expect to see the largest correlations occuring within groups rather than between genes in different groups.

To construct the group knockoffs, we require the following condition on the matrix $\Xp\in\R^{n\times p}$:
\begin{multline}\label{eqn:construct_group_knockoffs}
\Xp^\top \Xp = \Sigma\coloneqq X^\top X ,\text{ and }
\Xp^\top X = \Sigma - S,\\ \text{ where $S\succeq 0$ is group-block-diagonal,}
\end{multline}
meaning that $S_{G_i,G_j} = 0$ for any two distinct groups $i\neq j$. 
Abusing notation, write $S = \diag\{S_1,\dots,S_m\}$ where $S_i\succeq 0$ is the $p_i\times p_i$ matrix for the $i$th group block, meaning that $S_{G_i,G_i} = S_i$ for each $i$ while
$S_{G_i,G_j}=0$ for each $i\neq j$.
Extending the construction of \cite{barber2015},\footnote{This construction is for the setting $n\geq 2p$; see \cite{barber2015}
for a simple trick to extend to $n\geq p$.}
we construct these knockoffs by first selecting $S=\diag\{S_1,\dots,S_m\}$ that satisfies the condition $S\preceq 2\Sigma$, then setting
\[\Xp = X(\ident_p -\Sigma^{-1}S)+\widetilde{U}C\]
where $\widetilde{U}$ is a $n\times p$ orthonormal matrix orthogonal to the span of $X$, while $C^\top C = 2S - S \Sigma^{-1}S$ is a Cholesky decomposition.
Now, we still need to choose the matrix $S\succeq 0$, which has group-block-diagonal structure, so that the condition $S\preceq 2\Sigma$ is satisfied (this condition ensures the existence of the Cholesky decomposition defining $C$). To do this, 
we choose the following construction: we set $S = \diag\{S_1,\dots,S_m\}$ where we choose $S_i = \gamma \cdot \Sigma_{G_i,G_i}$; the scalar $\gamma\in[0,1]$ is chosen to be as large as possible so that $S\preceq 2\Sigma$ still holds, which
amounts to choosing 
\[\gamma = \min\left\{1, 2\cdot \lambda_{\min}\left(D\Sigma D\right)\right\}\]
where
$D = \diag\{\Sigma_{G_1,G_1}^{-\nicefrac{1}{2}},\dots,\Sigma_{G_m,G_m}^{-\nicefrac{1}{2}}\}$.
This construction can be viewed as an extension of the ``equivariant'' knockoff construction of~\citet{barber2015}; 
their SDP construction, which gains a slight power increase in the non-grouped setting, may
also be extended to the grouped setting but we do not explore this here.

Looking back at the group knockoff matrix condition~\eqnref{construct_group_knockoffs},
 we see that any knockoff matrix $\Xp$ satisfying~\eqnref{knockoff_condition}
would necessarily also satisfy this group-level condition. However, the group-level condition is weaker;
it allows more flexibility in constructing $\Xp$, and therefore, will enable more separation
between a feature $X_j$ and its knockoff $\Xp_j$, which in turn can increase power to detect
the true signals.

\subsection{Filter step}
After constructing the group knockoff matrix, we then select a set of discoveries (at the group
level) as follows. First, we apply the group lasso~\eqnref{glasso} to the augmented data set,
\[\betah = \argmin_{b\in\R^{2p}}\left\{\norm{y - [X \ \Xp]b}^2_2 +\lambda\grpnorm{b}\right\}.\]
Here, with the augmented design matrix $[X \ \Xp]$, we 
 we now have $2m$ many groups: one group $G_i$ for each group in the original design  matrix, 
 and one group $\widetilde{G}_i = \{j+p : j\in G_i\}$ corresponding to the same group within the 
 knockoff matrix; the penalty norm is then defined as $\grpnorm{b}=\sum_{i=1}^m \norm{b_{G_i}}_2 + \sum_{i=1}^m \norm{b_{\widetilde{G}_i}}_2$.

The filter process then proceeds exactly as for the original knockoff method,
with groups of features in place of individual features. 
First we record the time when each  group or knockoff
group enters the lasso path,
\[\lambda_i = \sup\{\lambda : \betah(\lambda)_{G_i}\neq 0\}, \tlam_i = \sup\{\lambda:\betah(\lambda)_{\widetilde{G}_i}\neq 0\},\]
then define the selected groups and knockoff groups as
\[\Sh(\lambda) = \{i:\lambda_i > \tlam_i\vee\lambda\}\text{ and }\St(\lambda) =\{i:\tlam_i > \lambda_i\vee\lambda\}\]
(note that these sets are subsets of $\{1,\dots,m\}$, the list of groups, rather than counting individual features).
Finally, 
estimate the proportion of false discoveries in $\Sh(\lambda)$ exactly as in~\eqnref{fdphat},
and define $\hlam = \min\{\lambda:\fdph(\lambda)\leq q\}$ as before; the 
final set of discovered groups is given by $\Sh(\hlam)$. (For group knockoff+, we use the more conservative
estimate of the group FDP, as for the knockoff.)

\subsection{Theoretical Results}
Here we turn to a more general framework for the group knockoff, working with the 
setup introduced in~\citet{barber2015}. Let $W\in\R^m$ be a vector of statistics,
one for each group, with large positive values for $W_i$ indicating strong evidence that
group $i$ may have a nonzero effect (i.e.~$\beta_{G_i}\neq 0$).
$W$ is defined as a function of the augmented design matrix $[X \ \Xp]$ and the response $y$,
which we write as $W = w([X \ \Xp], y)$. In the group lasso setting described above, the statistic
is given by
\[W_i = (\lambda_i \vee \tlam_i ) \cdot \sign(\lambda_i - \tlam_i).\]
In general, we require two properties for this statistic: sufficiency and group-antisymmetry. The first
is exactly as for (non-group) knockoffs; the second is a modification moving to the group sparse setting.
\begin{definition}\label{def:sufficiency} The statistic $W$ is said to obey the sufficiency property if it only depends on the Gram matrix and feature-response inner products, that is, for any $X,\Xp,y$,
\begin{equation}
w([X \ \Xp],y)=f([X,\Xp]^\top [X,\Xp], [X,\Xp]^\top y)
\end{equation}
for some function $f$.
\end{definition}

Before defining the group-antisymmetry property, we introduce some notation. For any group $i=1,\dots,m$, 
let $[ X \ \Xp]_{\textnormal{swap}(i)}$  be the matrix with
\[\left([ X \ \Xp]_{\textnormal{swap}(i)}\right)_j = \begin{cases} 
X_j,&\text{ if $1\leq j\leq p$ and $j\not\in G_i$,}\\
\Xp_j,&\text{ if $1\leq j\leq p$ and $j\in G_i$,}\end{cases}\]
and
\[\left([ X \ \Xp]_{\textnormal{swap}(i)}\right)_{j+p} = \begin{cases} 
\Xp_j,&\text{ if $1\leq j\leq p$ and $j\not\in G_i$,}\\
X_j,&\text{ if $1\leq j\leq p$ and $j\in G_i$,}\end{cases}\]
for each $j=1,\dots,p$. In other words, the columns corresponding to $G_i$ in the original component $X$,
are swapped with the same columns of $\Xp$.
\begin{definition}\label{def:group_antisymmetry}
The statistic $W$ is said to obey the group-antisymmetry property if swapping two groups $X_i$ and $\Xp_i$ has the effect of switching the sign of $W_i$ with no other change to $W$, that is,
\[w([ X \ \Xp]_{\textnormal{swap}(i)},y) = \ident^{\pm}_i \cdot w([X \ \Xp],y),\]
where $\ident^{\pm}_i$ is the diagonal matrix with a $-1$ in entry $(i,i)$ and $+1$ in all other diagonal entries.
\end{definition}

Next, to run the group knockoff or group knockoff+ method, we proceed exactly as in~\cite{barber2015};
we change notation here for better agreement with the group lasso setting. Define
\[\Sh(t) = \{i: W_i \geq t\}\text{ and }\St(t) =\{i: W_i\leq -t\}.\]
Then estimate the FDP as in~\eqnref{fdphat} for the knockoff method, or as in~\eqnref{fdphatplus} for knockoff+
(with parameter $t$ in place of the lasso penalty path parameter $\lambda$); then
find $\widehat{t}$, the minimum $t\geq 0$ with $\fdph(t)$ (or $\fdph_+(t)$) no larger than $q$, and 
output the set $\Sh = \Sh(\widehat{t})$ of discovered groups.

This procedure offers the following theoretical guarantee:
\begin{theorem}\label{thm:main_group}
If the vector of statistics $W$ satisfies the sufficiency and group-antisymmetry assumption, then 
the group knockoff procedure controls a modified group FDR,
\[\mgfdr = \EE{\frac{\#\{i:\beta_i=0,i\in \Sh\}}{\#\{i:i\in \Sh\} +q^{-1}}}\leq q,\]
while the group knockoff+ procedure controls the group FDR, $\gfdr\leq q$.
\end{theorem}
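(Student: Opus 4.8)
The plan is to follow the architecture of Theorems~1 and~2 of \cite{barber2015}, reducing the entire statement to a single structural fact about the null groups: conditionally on the magnitudes $|W_1|,\dots,|W_m|$ and on the signs of $W_i$ for the non-null groups (those with $\beta_{G_i}\neq 0$), the signs of $W_i$ for the null groups (those with $\beta_{G_i}=0$) are independent, each uniform on $\{+1,-1\}$. Once this i.i.d.\ sign property is in hand, the stopping-time and supermartingale argument of \cite{barber2015} transfers verbatim: it treats a single statistic $W_i$ per group and never inspects the internal structure of a group, so working at the group level changes nothing downstream. The only genuinely new work is establishing the sign property, where the group-block-diagonal structure of $S$ must play the role of the diagonal $\diag\{s\}$ in the ungrouped case.

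First I would prove the single-group sign-flip lemma. Fix a null group $i$ and consider the swap $[X\ \Xp]_{\textnormal{swap}(i)}$. By sufficiency (\defref{sufficiency}), $W$ depends on the data only through the Gram matrix $[X\ \Xp]^\top[X\ \Xp]$ and the inner products $[X\ \Xp]^\top y$, so I would check two invariances. The first is that the swap leaves the Gram matrix unchanged: using the construction~\eqnref{construct_group_knockoffs}, the within-group cross block is $(\Xp^\top X)_{G_i,G_i}=\Sigma_{G_i,G_i}-S_i$, which is symmetric, while every between-group block equals $\Sigma_{G_i,G_j}$ whether it comes from an original or a knockoff column, precisely because $S_{G_i,G_j}=0$ for $i\neq j$; these two facts are exactly what makes exchanging the $G_i$ columns of $X$ and $\Xp$ preserve $[X\ \Xp]^\top[X\ \Xp]$. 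The second is that $[X\ \Xp]^\top y$, a Gaussian vector with covariance $\sigma^2[X\ \Xp]^\top[X\ \Xp]$, has mean invariant under exchanging its block $X_{G_i}^\top y$ with $\Xp_{G_i}^\top y$; the two corresponding mean blocks are $(\Sigma\beta)_{G_i}$ and $(\Sigma\beta)_{G_i}-S_i\beta_{G_i}$, which coincide because $\beta_{G_i}=0$. Together these give $[X\ \Xp]_{\textnormal{swap}(i)}^\top y\stackrel{d}{=}[X\ \Xp]^\top y$, so by sufficiency and group-antisymmetry (\defref{group_antisymmetry}), $\ident^{\pm}_i\cdot W=w([X\ \Xp]_{\textnormal{swap}(i)},y)\stackrel{d}{=}W$; that is, the sign of $W_i$ may be flipped without altering the law of $W$. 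Since both invariances are block-local---the Gram identity holds for any simultaneous swap of a set of groups, and $S_i\beta_{G_i}=0$ holds separately for each null $i$---the identical computation yields $W\stackrel{d}{=}\varepsilon\odot W$ for every sign vector $\varepsilon\in\{\pm1\}^m$ supported on the null coordinates, which is precisely the conditional i.i.d.\ sign statement.

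Next I would close the argument as in \cite{barber2015}. Writing $V^+(t)=\#\{i:\beta_{G_i}=0,\ W_i\geq t\}$ and $V^-(t)=\#\{i:\beta_{G_i}=0,\ W_i\leq -t\}$, the i.i.d.\ sign property makes the relevant ratio a supermartingale as the threshold decreases through the null magnitudes, giving $\EE{\frac{V^+(\widehat{t})}{1+V^-(\widehat{t})}}\leq 1$ at the data-dependent stopping threshold $\widehat{t}$. For group knockoff+, the stopping rule forces $\frac{1+|\St(\widehat{t})|}{|\Sh(\widehat{t})|\vee 1}\leq q$, and since $|\St(\widehat{t})|\geq V^-(\widehat{t})$ I would write
\[\gfdr=\EE{\frac{V^+(\widehat{t})}{|\Sh(\widehat{t})|\vee 1}}=\EE{\frac{1+|\St(\widehat{t})|}{|\Sh(\widehat{t})|\vee 1}\cdot\frac{V^+(\widehat{t})}{1+|\St(\widehat{t})|}}\leq q\cdot\EE{\frac{V^+(\widehat{t})}{1+V^-(\widehat{t})}}\leq q.\]
The bound $\mgfdr\leq q$ for plain group knockoff follows from the analogous decomposition in \cite{barber2015}, in which the $q^{-1}$ term in the denominator of $\mgfdr$ supplies the role of the ``$+1$'' that is dropped when one uses the estimate~\eqnref{fdphat} in place of its conservative variant.

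The step I expect to be the crux is the Gram-matrix invariance: it is the only place where the relaxation from $\diag\{s\}$ to a group-block-diagonal $S$ is both used and required, and it is what permits the one-statistic-per-group reduction. Once that invariance and the accompanying mean computation are verified, every remaining step is a transcription of \cite{barber2015} with ``group'' substituted for ``feature.''
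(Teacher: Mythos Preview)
Your proposal is correct and follows precisely the paper's approach: the paper reduces \thmref{main_group} to \lemref{signs} (the i.i.d.\ sign property for null groups) and invokes the supermartingale argument of \cite{barber2015} verbatim, and your sketch carries out exactly that reduction, supplying the Gram-matrix and mean-invariance computations that the paper leaves implicit. Your identification of the group-block-diagonal structure of $S$ as the crux---ensuring the swap leaves the Gram matrix unchanged while $S_i\beta_{G_i}=0$ handles the mean---is the right emphasis and matches what the paper means by ``exactly as for the individual-feature-level result.''
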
 
The proof of this result follows the original knockoff proof of~\citet{barber2015},
and we do not reproduce it here; the result
is an immediate consequence of their main lemma,
moved into the grouped setting:
\begin{lemma}\label{lem:signs}
Let $\epsilon\in \{\pm 1\}^{m}$ be a sign sequence independent of $W$, with $\epsilon_i=1$ for all non-null groups $i$ and $\epsilon_i\sim \{\pm 1\}$ independently with equal probability for all null groups $i$. Then we have
\begin{equation}
(W_1,\cdots,W_m)=_{d} (W_1\epsilon_1,\cdots, W_m\epsilon_m),
\end{equation} 
where $=_{d}$ denotes equality in distribution.
\end{lemma}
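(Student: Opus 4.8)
The plan is to reduce the statement to a deterministic sign-flip invariance and then average over the randomness in $\epsilon$. Concretely, for a fixed sign pattern $\eta\in\{\pm1\}^m$ with $\eta_i=1$ on every non-null group, I will show that flipping the signs of $W$ on the coordinates $A(\eta)=\{i:\eta_i=-1\}$ (a subset of the null groups) leaves the law of $W$ unchanged. Since $\epsilon$ is independent of $W$ and supported exactly on such patterns, conditioning on $\epsilon=\eta$ and summing gives, for every bounded test function $g$, the identity $\EE{g(W_1\epsilon_1,\dots,W_m\epsilon_m)}=\sum_\eta\PP{\epsilon=\eta}\,\EE{g(W\odot\eta)}=\EE{g(W)}$, where $\odot$ denotes the componentwise product; this is exactly the claim. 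The two ingredients behind the deterministic invariance are \defref{sufficiency} and \defref{group_antisymmetry}, together with the distributional symmetry of the sufficient statistics under group swaps.

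First I would set $G=[X\ \Xp]^\top[X\ \Xp]$ and record that swapping group $i$ is right-multiplication by a permutation matrix $\Pi_i$ exchanging the columns indexed by $G_i$ with those indexed by $\widetilde{G}_i$, so $[X\ \Xp]_{\textnormal{swap}(i)}=[X\ \Xp]\Pi_i$ and the swapped Gram matrix is $\Pi_i^\top G\Pi_i$. I would then check $\Pi_i^\top G\Pi_i=G$ blockwise: the only blocks affected are those touching group $i$, and for a distinct group $j\ne i$ the two candidate values $\Sigma_{G_i,G_j}$ and $(\Sigma-S)_{G_i,G_j}$ coincide precisely because $S$ is group-block-diagonal, so $S_{G_i,G_j}=0$; the within-group-$i$ block is preserved because swapping both the row and column copy labels preserves whether they agree. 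This is exactly the point at which the weaker group condition~\eqnref{construct_group_knockoffs} suffices in place of the entrywise condition~\eqnref{knockoff_condition}.

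Next I would analyze the feature--response inner products $V=[X\ \Xp]^\top y$. Under the model $y=X\beta+z$ with $z\sim\normal(0,\sigma^2\ident_n)$, the vector $V$ is Gaussian with covariance $\sigma^2 G$ and mean $\mu=[X\ \Xp]^\top X\beta$, whose top and bottom halves are $\Sigma\beta$ and $(\Sigma-S)\beta$. Swapping group $i$ sends $V$ to $\Pi_i^\top V$, whose covariance is $\sigma^2\Pi_i^\top G\Pi_i=\sigma^2 G$ by the previous step, and whose mean is $\Pi_i^\top\mu$. The swap leaves $\mu$ fixed iff $(\Sigma\beta)_{G_i}=((\Sigma-S)\beta)_{G_i}$, i.e.\ iff $(S\beta)_{G_i}=S_i\,\beta_{G_i}=0$, which holds whenever group $i$ is null. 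Hence for any subset $A$ of null groups the joint law of $(G,V)$ is invariant under simultaneously swapping all groups in $A$, since the swaps commute and $G$ is fixed throughout.

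Finally I would combine these facts. By \defref{sufficiency} we may write $W=f(G,V)$, so swapping the groups in $A$ replaces $W$ by $f(G,\Pi_A^\top V)$, where $\Pi_A=\prod_{i\in A}\Pi_i$, and by the previous paragraph this has the same law as $f(G,V)=W$. On the other hand, \defref{group_antisymmetry} applied once per group in $A$ shows $f(G,\Pi_A^\top V)$ equals $W$ with its signs flipped on exactly the coordinates in $A$. Therefore $W\odot\eta=_d W$ for every admissible $\eta$, and averaging over $\epsilon$ as above finishes the proof. I expect the main obstacle to be the blockwise verification that both $G$ and, for null groups, the mean $\mu$ are invariant under a group swap: this is where the group-block-diagonal structure of $S$ is essential, and it is the only place the argument genuinely departs from the single-feature proof in~\cite{barber2015}.
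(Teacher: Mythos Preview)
Your argument is correct and follows exactly the route the paper indicates: the paper does not spell out a proof but simply states that the lemma ``can be proved via the sufficiency and group-antisymmetry properties, exactly as for the individual-feature-level result of~\citet{barber2015},'' and your proposal is precisely a careful write-up of that argument, with the key observation that the group-block-diagonal structure of $S$ in~\eqnref{construct_group_knockoffs} is what makes the Gram matrix swap-invariant and the mean of $V$ swap-invariant on null groups. There is nothing to correct.
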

This lemma can be proved via the sufficiency and group-antisymmetry properties, exactly
as for the individual-feature-level result of~\citet{barber2015}.

\section{Knockoffs for multitask learning}\label{sec:knockoff_multitask}

For the multitask learning problem, the reformulation as a group lasso problem~\eqnref{multitask_as_group_lasso}
suggests that we can apply the group-wise knockoffs to this problem as well. However, there is one immediate difficulty:
the model for the noise $\eps$ in~\eqnref{multitask_as_group_lasso} has changed---the entries of $\eps$ are not independent,
but instead follow a multivariate Gaussian model with covariance $\bbSig$.
In fact, we will see shortly that we can work even in this more general setting.
Reshaping the data to form a group lasso problem as in~\eqnref{multitask_vector_model},
we will work with the vectorized response $y\in\R^{nr}$ and the repeated-block design matrix $\bbX\in\R^{nr\times pr}$.
We will also construct a repeated-block knockoff matrix,
\[\bbXp = \ident_r\otimes \Xp=\left(\begin{array}{cccc} \Xp & 0 & \dots & 0 \\ 0 & \Xp & \dots & 0 \\ && \dots & \\ 0 & 0 & \dots & \Xp\end{array}\right),\]
where $\Xp\in\R^{n\times p}$ is any matrix satisfying the original knockoff construction conditions~\eqnref{knockoff_condition}
with respect to the original design matrix $X$.
Applying the group knockoff methodology with this data $(\bbX,y)$ and knockoff matrix $\bbXp$,
we obtain the following result:
\begin{theorem}
For the multitask learning setting with an arbitrary covariance structure $\Sigma\in\R^{r\times r}$,
the knockoff or knockoff+ methods control the modified group FDR or the group FDR, respectively,
at the level $q$.
\end{theorem}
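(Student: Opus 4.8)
The plan is to reduce the claim to the group-level sign-flipping lemma (Lemma~\ref{lem:signs}) in this correlated-noise setting: once that lemma holds, the remainder of the argument behind Theorem~\ref{thm:main_group} is purely combinatorial and never references the noise distribution, so the $\mgfdr$ and $\gfdr$ bounds follow unchanged. First I would note that the two structural hypotheses come for free. The group lasso statistic $W_i = (\lambda_i\vee\tlam_i)\cdot\sign(\lambda_i-\tlam_i)$ is an algebraic function of $([\bbX\ \bbXp], y)$, so it satisfies sufficiency (Definition~\ref{def:sufficiency}) and group-antisymmetry (Definition~\ref{def:group_antisymmetry}) no matter how $y$ is distributed. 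I would also check that $\bbXp=\ident_r\otimes\Xp$ satisfies the group knockoff condition~\eqnref{construct_group_knockoffs} for the groups $G_1,\dots,G_p$: from $\Xp^\top\Xp=X^\top X$ and $\Xp^\top X=X^\top X-\diag\{s\}$ one gets $\bbXp^\top\bbXp=\bbX^\top\bbX=\ident_r\otimes(X^\top X)$ and $\bbX^\top\bbXp=\ident_r\otimes(X^\top X-\diag\{s\})$, so the relevant $S=\ident_r\otimes\diag\{s\}$ is in fact \emph{diagonal}, hence trivially group-block-diagonal. In particular the Gram matrix $M\coloneqq[\bbX\ \bbXp]^\top[\bbX\ \bbXp]$ is invariant under the swap $[\bbX\ \bbXp]\mapsto[\bbX\ \bbXp]_{\textnormal{swap}(i)}$, exactly as in the single-task construction of~\cite{barber2015}.

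With $M$ fixed and swap-invariant, and with $W=f(M, u)$ depending on the data only through $u\coloneqq[\bbX\ \bbXp]^\top y$ by sufficiency, the entire lemma reduces to one distributional statement: for any null group $i$ (and, more generally, any subset of null groups), $u =_{d} u_{\textnormal{swap}(i)}$, where the swap permutes the group-$i$ coordinates of $\bbX^\top y$ and $\bbXp^\top y$. Since $y=\bbX\beta+\eps$ with $\eps\sim\normal(0,\bbSig)$ is Gaussian, $u$ is Gaussian, so it suffices to match means and covariances before and after the swap. For the mean, $\EE{u}=[\bbX\ \bbXp]^\top\bbX\beta$, and a direct computation shows that, task by task, the group-$i$ blocks of $\bbX^\top\bbX\beta$ and $\bbXp^\top\bbX\beta$ differ by exactly $s_i\,\beta_{G_i}$; this vanishes precisely when group $i$ is null, giving mean invariance.

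The main obstacle is the covariance, since $\bbSig=\Sigma\otimes\ident_n$ is not a multiple of the identity. The key observation is that the swap acts only on the feature-versus-knockoff and within-feature indices and acts as the identity on the task index, whereas $\Sigma$ couples only the task index. Concretely, I would compute $\cov(u)=[\bbX\ \bbXp]^\top\bbSig[\bbX\ \bbXp]$ and, after reordering coordinates so the task index is outermost, recognize it as $\cov(u)=\Sigma\otimes M_0$, where
\[M_0=\begin{pmatrix} X^\top X & X^\top X-\diag\{s\}\\ X^\top X-\diag\{s\} & X^\top X\end{pmatrix}\]
is exactly the $2p\times2p$ Gram matrix $[X\ \Xp]^\top[X\ \Xp]$ of the single-task knockoff construction. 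The swap then factors as $P=\ident_r\otimes P_0$, with $P_0$ the single-task swap of feature $i$, so that $P^\top\cov(u)P=\Sigma\otimes(P_0^\top M_0 P_0)=\Sigma\otimes M_0=\cov(u)$, using the single-task invariance $P_0^\top M_0 P_0=M_0$ that again follows from $\diag\{s\}$ being diagonal. Hence $u =_{d} u_{\textnormal{swap}(i)}$ for null $i$, and the identical argument applied to any subset of null groups yields Lemma~\ref{lem:signs} in this setting; the FDR and mFDR bounds of Theorem~\ref{thm:main_group} then carry over. Beyond careful Kronecker bookkeeping, the only real work is verifying that the task-index coupling introduced by $\Sigma$ genuinely commutes with the swap—once $\cov(u)$ is put in the form $\Sigma\otimes M_0$, the correlated-noise case is no harder than the original.
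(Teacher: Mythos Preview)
Your argument is correct, but it takes a genuinely different route from the paper's. The paper's proof whitens the noise: it multiplies the model by $\bbSig^{-\nicefrac{1}{2}}$, uses the Kronecker identity $\bbSig^{-\nicefrac{1}{2}}\bbX=\bbX\,\bbSig_*^{-\nicefrac{1}{2}}$ (with $\bbSig_*^{-\nicefrac{1}{2}}=\Sigma^{-\nicefrac{1}{2}}\otimes\ident_p$) to keep the design matrix unchanged, observes that the null groups of $\beta$ coincide with those of $\bbSig_*^{-\nicefrac{1}{2}}\beta$, and then checks that the group-lasso statistic---although computed from the unwhitened $y$---is still a function of the whitened sufficient statistic $[\bbX\ \bbXp]^\top y^{\text{wh}}$ via $[\bbX\ \bbXp]^\top y=\bbSig_*^{\nicefrac{1}{2}}\cdot[\bbX\ \bbXp]^\top y^{\text{wh}}$. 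This reduces everything to the already-proved i.i.d.-noise case of \thmref{main_group}.

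You instead bypass whitening and verify the sign-flipping lemma directly under correlated noise, by computing $\cov(u)=\Sigma\otimes M_0$ (after reordering) and observing that the swap factors as $\ident_r\otimes P_0$, which commutes with the task-level coupling. Both proofs ultimately exploit the same Kronecker separability---the paper pushes $\Sigma^{-\nicefrac{1}{2}}$ through the design, you push the swap through $\Sigma$---but yours is more hands-on. One small dividend of your route is that it never inverts $\Sigma$, so it goes through verbatim for degenerate noise covariances; the paper's whitening step formally needs $\Sigma$ invertible. Conversely, the paper's reduction is more modular: once the commutation identity is in hand, \thmref{main_group} applies as a black box with no further Gaussian calculation.
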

\begin{proof}
In order to apply the result for the group-sparse setting to this multitask scenario,
we need to address two questions: first, whether $\bbXp$ satisfies the group knockoff matrix
conditions~\eqnref{construct_group_knockoffs}, and second, how to  handle
the issue of the non-\iid structure of the noise $\eps$.

We first check the conditions~\eqnref{construct_group_knockoffs} for $\bbXp$.
let $\Xp\in\R^{n\times p}$ be a knockoff matrix for $X$, satisfying~\eqnref{knockoff_condition}, and 
let $\Sigma = X^\top X$.
 Then 
we see that
\begin{multline*}\bbXp^\top\bbXp = \ident_r\otimes (\Xp^\top \Xp) = \ident_r \otimes \Sigma = \bbX^\top \bbX,\text{ and }\\
\bbXp^\top\bbX = \ident_r\otimes (\Xp^\top X) = \ident_r \otimes(\Sigma - \diag\{s\})\\ = \bbX^\top \bbX - \ident_r\otimes \diag\{s\}\end{multline*}
where $s$ is defined as in~\eqnref{knockoff_condition}. Since the difference $\ident_r\otimes\diag\{s\}$ is a diagonal matrix, we see that $\bbXp$ satisfies the group knockoff condition~\eqnref{construct_group_knockoffs}; in fact, it satisfies
the stronger (ungrouped) knockoff condition~\eqnref{knockoff_condition}.

\begin{figure*}[t]\centering
\includegraphics[width=\textwidth]{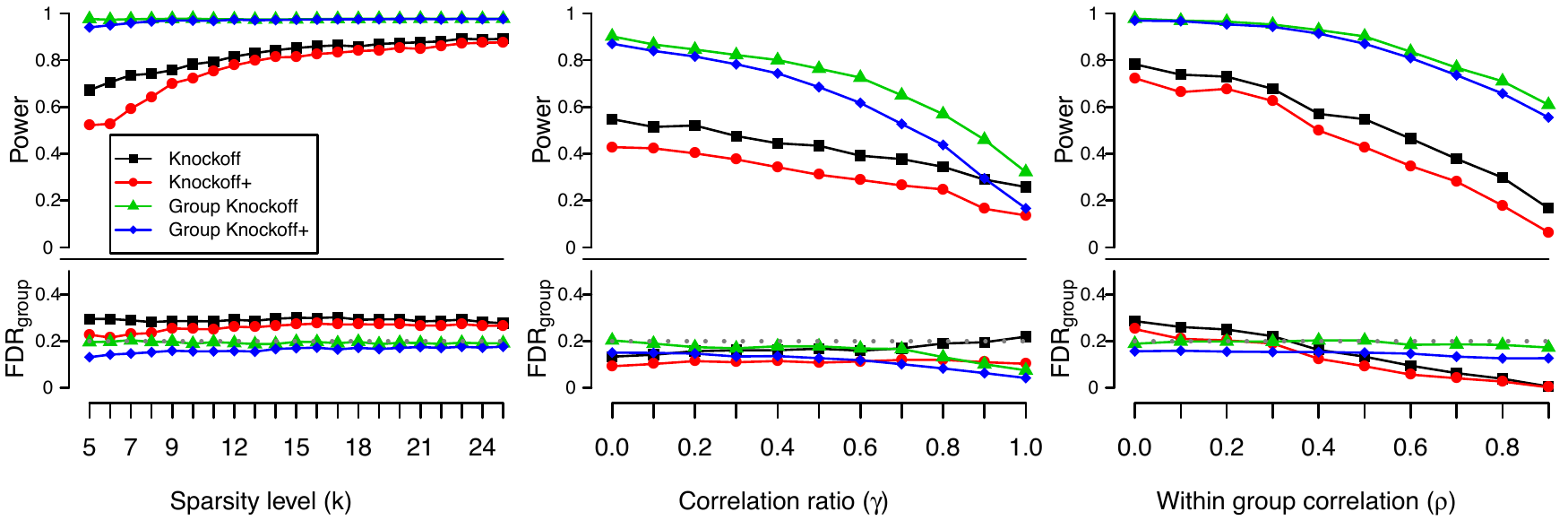}
\caption{Results for the group-sparse regression simulation, comparing group knockoff and knockoff+ against the
original knockoff and knockoff+ methods.}
\label{fig:groupsparse_sim}
\end{figure*}

Next we turn to the issue of the non-identity covariance structure $\bbSig$
for the noise term $\eps\in\R^{nr}$. First, 
write
\[\bbSig^{-\nicefrac{1}{2}} = \Sigma^{-\nicefrac{1}{2}}\otimes \ident_n\]
to denote an inverse square root for $\bbSig$.
Note also that
\begin{multline}\label{eqn:commutes}\bbSig^{-\nicefrac{1}{2}}\cdot \bbX  = (\Sigma^{-\nicefrac{1}{2}}\otimes\ident_n)\cdot (\ident_r\otimes X)
 = \Sigma^{-\nicefrac{1}{2}}\otimes X\\ =(\ident_r\otimes X)\cdot (\Sigma^{-\nicefrac{1}{2}}\otimes\ident_p) = \bbX \cdot \bbSig^{-\nicefrac{1}{2}}_*,\end{multline}
 for $\bbSig^{-\nicefrac{1}{2}}_* = \Sigma^{-\nicefrac{1}{2}}\otimes\ident_p$.
Taking our vectorized  multitask regression model~\eqnref{multitask_vector_model},  multiplying both sides by $\bbSig^{-\nicefrac{1}{2}}$ on the left,
and applying~\eqnref{commutes}, we obtain a ``whitened'' reformulation of our model,
\begin{equation}\label{eqn:multitask_whitened}
y^{\text{wh}} = \bbX \cdot (\bbSig^{-\nicefrac{1}{2}}_* \beta) + \eps^{\text{wh}}\text{ for }\begin{cases}y^{\text{wh}} = \bbSig^{-\nicefrac{1}{2}} y,\\
 \eps^{\text{wh}} =  \bbSig^{-\nicefrac{1}{2}} \eps,\end{cases}\end{equation}
where $\eps^{\text{wh}}\sim \normal(0,\ident_{nm})$ is the ``whitened'' noise.
Now we are back in a standard linear regression setting, and can apply the knockoff method---note that we are working with a new setup: while the design matrix $\bbX$ is the same as in~\eqnref{multitask_vector_model}, 
we now work with response vector $y^{\text{wh}}$ and coefficient vector $\bbSig^{-\nicefrac{1}{2}}_* \beta$.
The group sparsity of the coefficient vector has not changed, due to the block structure of $\bbSig^{-\nicefrac{1}{2}}$;
we have
\[(\bbSig^{-\nicefrac{1}{2}}_* \beta)_{G_j} = \bbSig_{G_j,G_j}^{-\nicefrac{1}{2}}\beta_{G_j}\]
for each $j=1,\dots,p$, and so the ``null groups'' for the original coefficient vector $\beta$ (i.e.~groups
$j$ with $\beta_{G_j}=0$) are preserved in this reformulated model.

We need to check only that the
group lasso output, namely $\betah$, depends on the data only through the sufficient statistics $\bbX^\top\bbX$ and $\bbX^\top y^{\text{wh}}$; 
here we use the ``whitened'' response $y^{\text{wh}}$ rather than the original response vector $y$ since the knockoff theory applies
to linear regression with \iid Gaussian noise, as in the model~\eqnref{multitask_whitened} for $y^{\text{wh}}$.
When we apply the group lasso, as in the optimization problem~\eqnref{multitask_as_group_lasso}, it is clear
that the minimizer $\betah$ depends on the data $\bbX,y$ only through $\bbX^\top\bbX$ and $\bbX^\top y$. Furthermore, we can write
\[\bbX^\top y = \bbX^\top \bbSig^{\nicefrac{1}{2}} y^{\text{wh}} = \bbSig^{\nicefrac{1}{2}}_*\cdot (\bbX^\top y^{\text{wh}}),\]
where we can show $\bbSig^{\nicefrac{1}{2}}\cdot \bbX = \bbX\cdot \bbSig^{\nicefrac{1}{2}}_*$ exactly as in~\eqnref{commutes} before. Therefore,
$\betah$, depends on the data only through the sufficient statistics $\bbX^\top\bbX$ and $\bbX^\top y^{\text{wh}}$, as desired.

Our statistics for the knockoff filter therefore will  satisfy the sufficiency property. 
The group-antisymmetry property
is obvious from the definition of the method.
Therefore, applying our main result \thmref{main_group} for the group-sparse setting
to the whitened model~\eqnref{multitask_whitened}, we see that the (modified or unmodified) group FDR
control result holds for this setting.
\end{proof}

\section{Simulated data experiments}
We test our methods in the group sparse and multitask settings. All experiments were carried
out in Matlab \cite{MATLAB} and R \cite{R}, including the \texttt{grpreg} package in R \cite{breheny2015group}.

\subsection{Group sparse setting}
To evaluate the performance of our method in the group sparse setting,
we compare it empirically with the (non-group) knockoff using simulated data from a group sparse linear regression,
and examine the effects of sparsity level and feature correlations within and between groups.

\subsubsection{Data}
To generate the simulation data, we use the sample size $n=3000$ with number of features $p=1000$. In our basic setting, 
the number of groups is $m=200$ with corresponding number of features per group set as $p_i=5$ for each group $i$. To generate features, as a default
we use an uncorrelated setting, drawing the entries of $X$ as \iid~standard normals, then normalize the columns of $X$. 
Our default sparsity level is $k=20$ (that is, $k$ groups with nonzero signal); $\beta_j$, for each $j$ inside a signal group, id chosen randomly from $\{\pm 3.5\}$.

To study the effects of sparsity level and feature correlation, we then vary these default settings as follows (in each experiment,
one setting is varied while the others remain at their default level):
\begin{itemize}
\item Sparsity level: we vary the number of groups with nonzero effects, $k\in\{10,12,14,\dots,50\}$.
\item Between-group correlation: we fix within-group correlation $\rho = 0.5$, and set the between-group correlation to be
$\gamma\rho$, with $\gamma\in\{0,0.1,0.2,\dots,0.9\}$. We then
draw the rows of $X\in\R^{n\times p}$ independently from a multivariate normal distribution with mean $0$ and covariance matrix $\Sigma$, with diagonal entries $\Sigma_{jj}=1$,
within-group correlations $\Sigma_{jk}=\rho$ for $j\neq k$ in the same group, and between-group correlations
$\Sigma_{jk}=\gamma\rho$
  for $j,k$ in different groups. 
   Afterwards, we normalize the columns of $X$.
\item Within-group correlation: as above, but we fix $\gamma=0$ (so that between-group correlation is always zero)
and vary within-group correlation, with $\rho\in\{0,0.1,\dots,0.9\}$.
\end{itemize}

For each setting, we use target FDR level $q=0.2$ and repeat each experiment $100$ times.

\begin{figure*}[t]\centering
\includegraphics[width=0.24\textwidth]{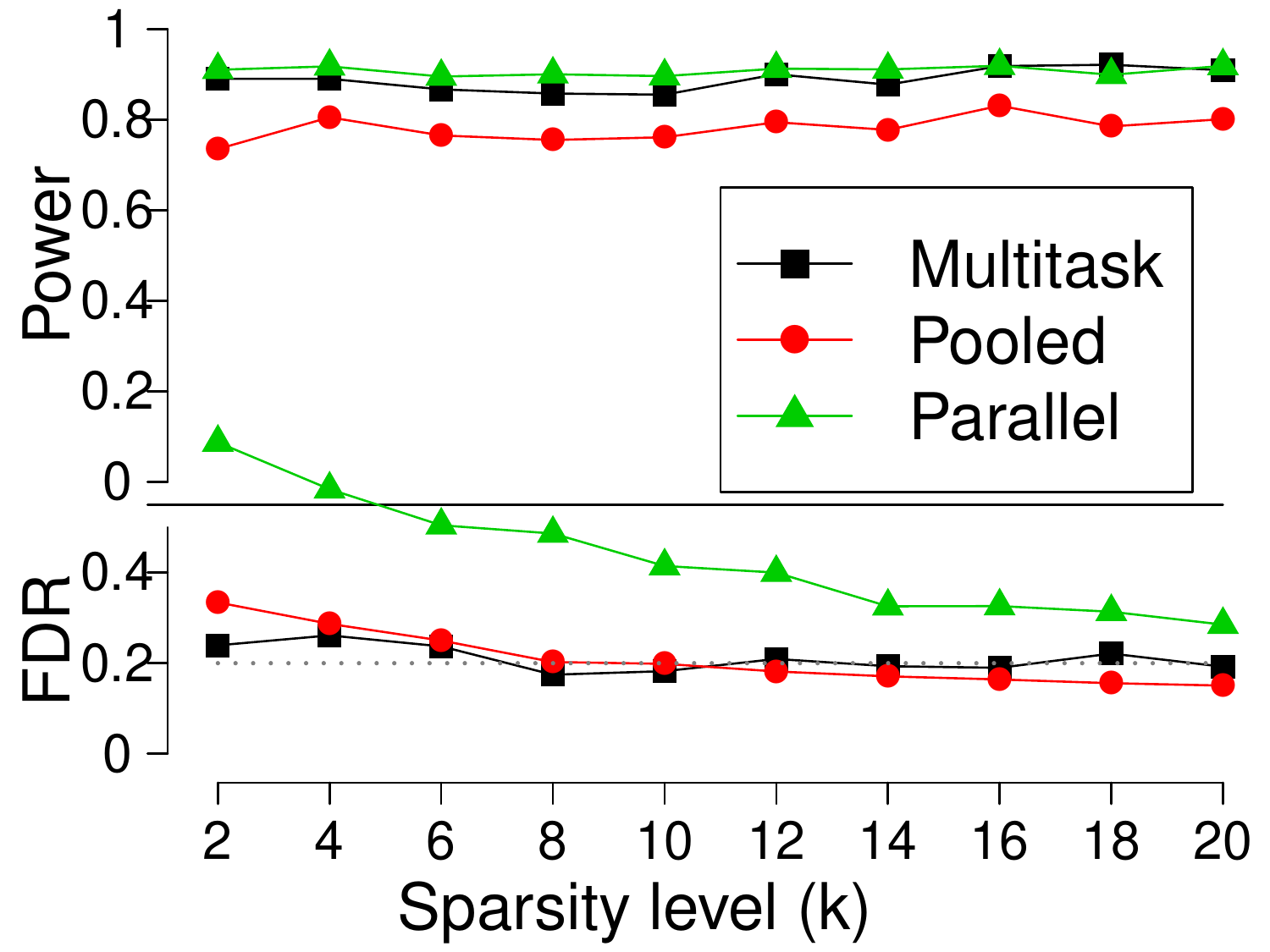}
\includegraphics[width=0.24\textwidth]{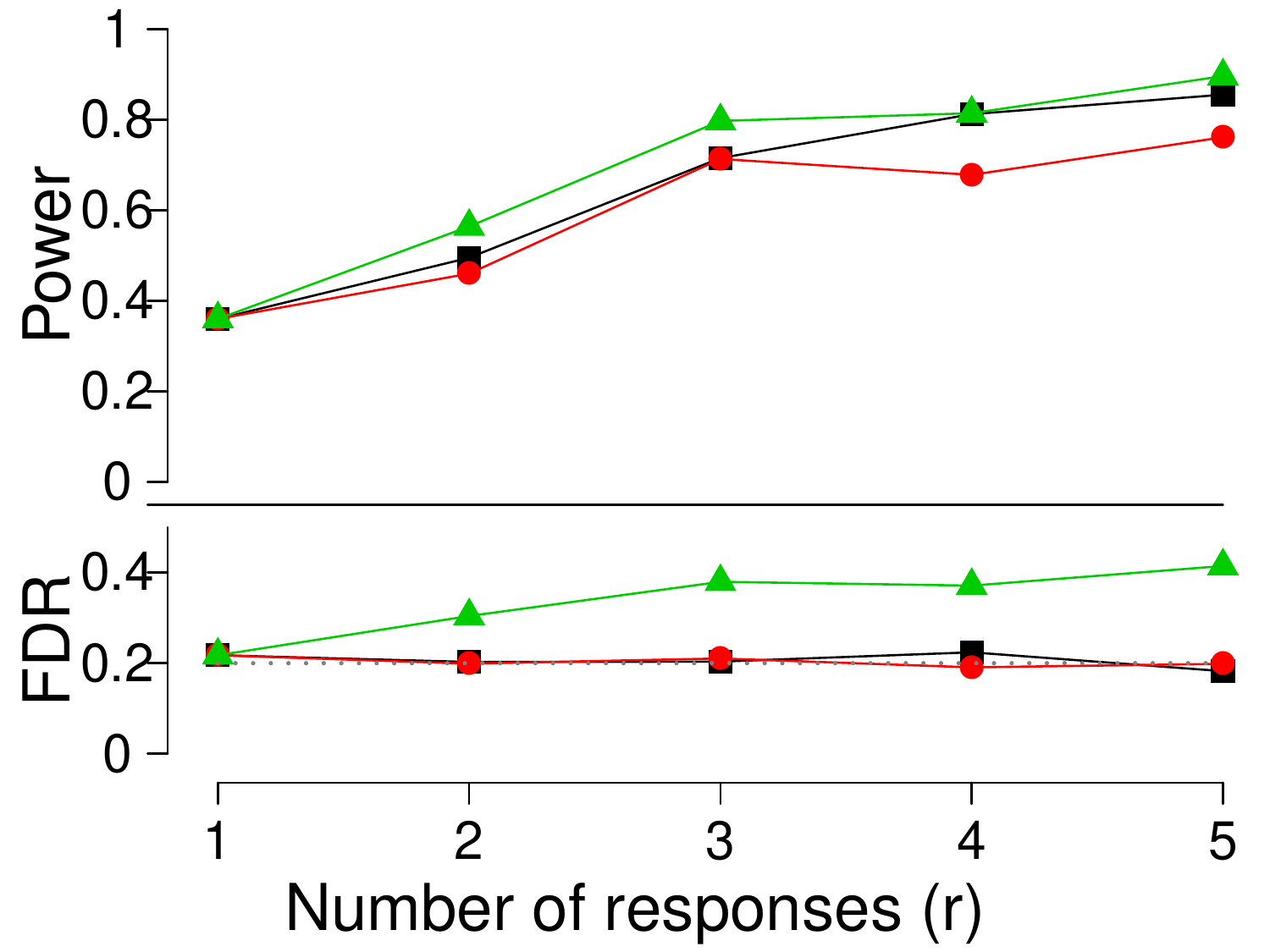}
\includegraphics[width=0.24\textwidth]{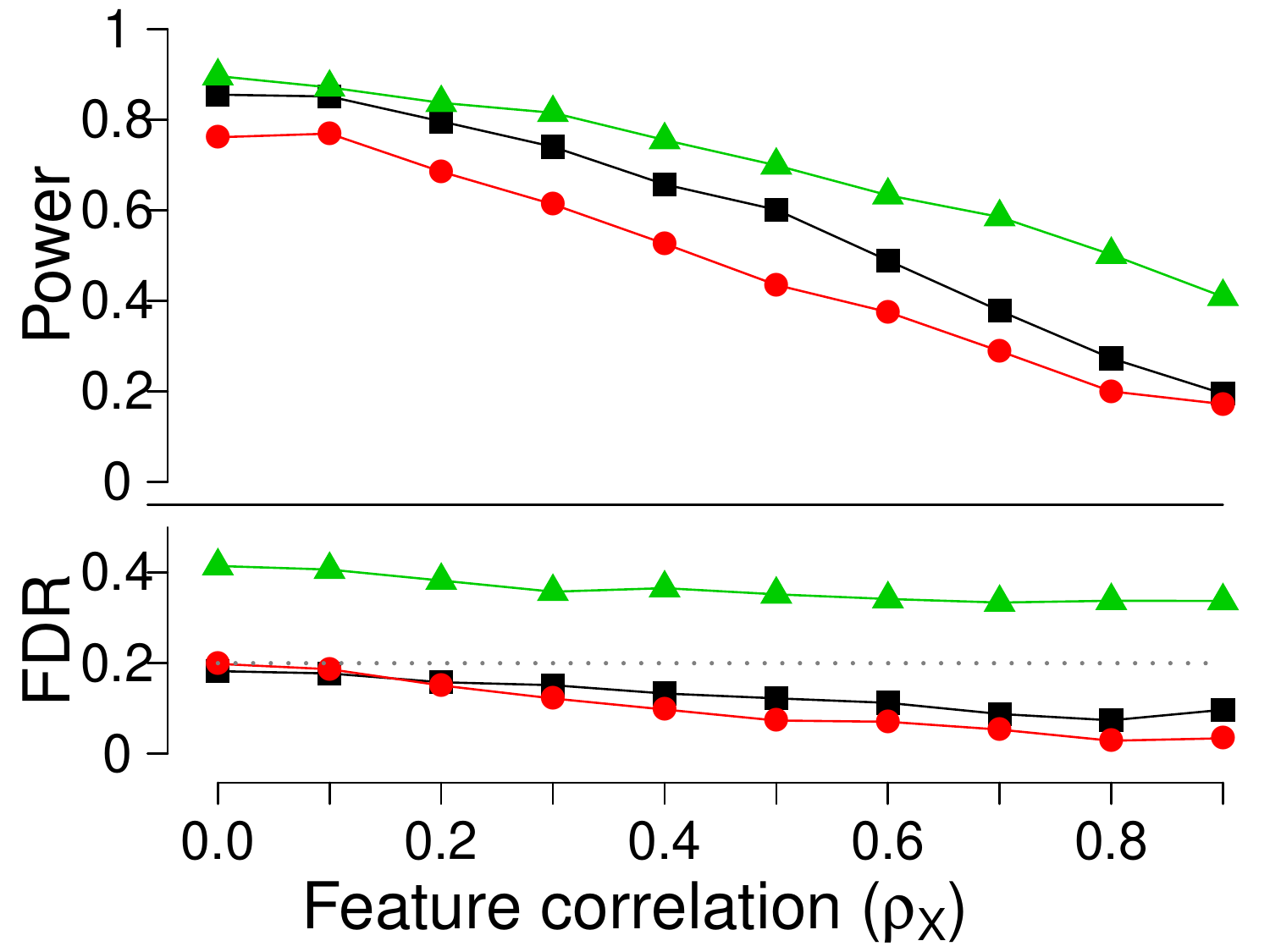}
\includegraphics[width=0.24\textwidth]{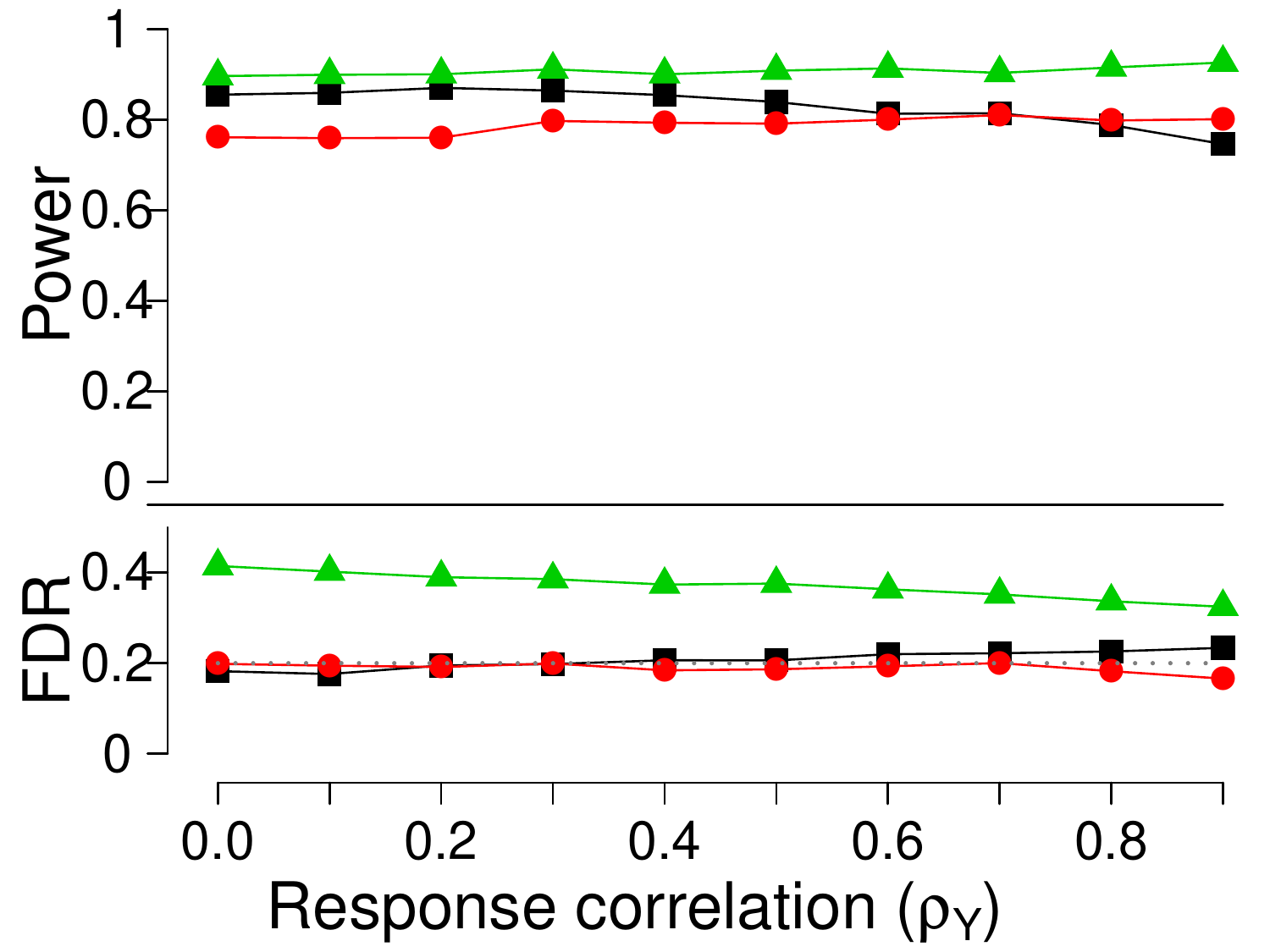}
\caption{Results for the multitask regression simulation, comparing multitask knockoff  with the
pooled and parallel knockoff methods.}
\label{fig:multitask_sim}
\end{figure*}

\subsubsection{Results}

Our results are displayed in \figref{groupsparse_sim}, which displays power (the proportion of true signals
which were discovered) and FDR
at the group level, averaged over all trials. We see that all four methods successfully control FDR at the desired
level. Across all settings, the group knockoff is more powerful than the knockoff, showing the benefit of 
leveraging the group structure. The group knockoff+ and knockoff+ are each slightly more conservative
than their respective methods without the ``+'' correction. 
From the experiments with zero between-group correlation and increasing within-group correlation $\rho$,
 we see that  knockoff has rapidly decreasing power as $\rho$ increases, while group knockoff does not show much power loss.
This highlights the benefit of the group-wise
construction of the knockoff matrix; for the original knockoff,
high within-group correlation forces the knockoff features $\Xp_j$ to be nearly equal to the $X_j$'s, but
this is not the case for the group knockoff construction and the greater separation allows high power to be maintained.

\subsection{Multitask regression setting}
To evaluate the performance of our method in the multitask regression setting, we next
perform a simulation to compare the multitask knockoff
with the  knockoff. (For clarity in the figures, we do not present results for the knockoff+ versions of these methods;
the outcome is predictable, with knockoff+ giving slightly better FDR control but lower power.)
For the multitask knockoff, we implement the method exactly as described in \secref{knockoff_multitask}.
The $j$th feature is considered a discovery if the corresponding group is selected.
For the  knockoff, we use the group lasso formulation of the multitask model, given in~\eqnref{multitask_vector_model},
and apply the knockoff method to the reshaped data set $(\bbX,y)$; we call this the ``pooled''  knockoff.
We also run the  knockoff separately on each of the $r$ responses (that is,
we run the knockoff with data $(X,Y_j)$ where $Y_j$ is the $j$th column of $Y$, separately for $j=1,\dots,r$).
We then combine the results: the $j$th feature is considered a discovery if it is selected in any of 
the $r$ individual regressions; this version is the ``parallel''  knockoff.

\subsubsection{Data}
To generate the data, our default settings for the multitask model given in~\eqnref{multitask_model}
are as follows: we set 
the sample size $n=150$, the number of features $p=50$, with $m=5$ responses. The true matrix 
of coefficients $B$ has its $k=10$ rows nonzero, which are chosen as $2\sqrt{m}$ times a random 
unit vector. 
The design matrix $X$ is generated by drawing \iid~standard normal entries and then normalizing
the columns,
and the entries of the error matrix $E$ are also \iid standard normal.
We set the target FDR level at $q=0.2$ and repeat all experiments $100$ times. These default
settings will then be varied in our experiments to examine the roles of the various parameters (only
one parameter is varied at a time, with all other settings at their defaults):
\begin{itemize}
\item Sparsity level: the number of nonzero rows of $B$ is varied, with $k\in\{2,4,6,\dots,20\}$.
\item Number of responses: the number of responses $r$ is varied, with $r\in\{1,2,3,4,5\}$.
\item Feature correlation: 
the rows of $X$ are \iid~draws
from a $N(0,\Sigma_X)$ distribution, with a tapered covariance matrix which has entries $(\Sigma_X)_{jk} = (\rho_X)^{|j-k|}$,
with $\rho_X\in\{0,0.1,0.2,\dots,0.9\}$. (The columns of $X$ are then normalized.)
\item Response correlation: 
the rows of the noise $E$ are \iid~draws
from a $N(0,\Sigma_Y)$ distribution, with a equivariant correlation structure which has entries $(\Sigma_Y)_{jj}=1$ for all $j$,
and $(\Sigma_Y)_{jk}=\rho_Y$ for all $j\neq k$,
with $\rho_Y\in\{0,0.1,0.2,\dots,0.9\}$.
\end{itemize}

\subsubsection{Results}
Our results are displayed in \figref{multitask_sim}. For each method, we display the resulting FDR and power
for selecting features with true effects in the model.
The parallel  knockoff is not able to control the FDR. This may be due to the fact that 
this method combines discoveries across multiple responses; if the true positives selected for each response tend to overlap,
while the false positives tend to be different (as they are more random), then the false discovery
proportion in the combined results may be high even though it should be low for each individual responses' selections.
Therefore, while it is more powerful than the other methods, it does not lead to reliable
FDR control. 
Turning to the other methods, both multitask knockoff and pooled  knockoff generally control
FDR at or near $q=0.2$ except in the most challenging (lowest power) settings, where as expected
from the theory, the FDR exceeds its target level. 
Across all settings, multitask knockoff is more powerful than pooled  knockoff, and same for the two variants of knockoff+.
Overall we see the advantage in the multitask formulation, with which we are able to
identify a larger number of discoveries while maintaining FDR control.

\section{Real data experiment}

We next apply the knockoff for multitask regression to a real data problem. We study a
data set that seeks to idenitify drug resistant mutations in HIV-1 \cite{rhee2006genotypic}.
 This data set was analyzed by \cite{barber2015}
using the  knockoff method. Each observation, sampled from a single individual, identifies mutations along
various positions in the protease or reverse transcriptase (two key proteins) of the virus,
 and measures  resistance against a range of different
drugs from three classes: protease inhibitors (PIs), nucleoside reverse transcriptase inhibitors (NRTIs),
and nonnucleoside reverse transcriptase inhibitors (NNRTIs). 
In \cite{barber2015} the data for each drug was analyzed 
separately; the response $y$ was the resistance level to the drug while the features $X_j$ were markers for the presence
or absence of the $j$th mutation. Here, we apply the multitask knockoff to this problem: for each class of drugs,
since the drugs within the class have related biological mechanisms, we expect the sparsity pattern (i.e.~which mutations
confer resistance to that drug) to be similar across each class. We therefore have a matrix of responses, $Y\in\R^{n\times r}$,
where $n$ is the number of individuals and $r$ is the number of drugs for that class. We compare our results to
those obtained with the  knockoff method where drugs are analyzed one at a time (the ``parallel''
 knockoff from the multitask simulation).

\subsection{Data}
Data is analyzed separately for each of the three drug types.
To combine the data across different drugs, we first remove any drug with a high
proportion of missing drug resistance measurements; this results in two PI drugs and 
one NRTI drug being removed (each with over 35\% missing data). The remaining drugs all have $<10\%$
missing data; many drugs have only $1-2\%$ missing data. Next we remove data from any
individual that is missing drug resistance information from any of the (remaining) drugs. 
Finally, we keep only those mutations which appear $\geq 3$ times in the sample. 
The resulting data set sizes  are:
\begin{table}[ht]\small
\centering
\begin{tabular}{cccc}
\hline
Class &\#  drugs ($r$) &\#  observations ($n$)& \#  mutations ($p$) \\
\hline
PI &5 &701 &198\\
NRTI &5 &614 &283\\
NNRTI &3 &721 &308\\
\hline
\end{tabular}
\end{table}

\subsection{Methods}
For each of the three drug types, we form the $n\times r$ response matrix $Y$
by taking the log-transformed drug resistance measurement for the $n$ individuals and the $r$
drugs, and the $n\times p$ feature matrix $X$ recording which of the $p$ mutations are present
in each of the $n$ individuals. We then apply the multitask knockoff as described
in \secref{knockoff_multitask}, with target FDR level $q = 0.2$. For comparison,
we also apply the  knockoff to the same data (analyzing each drug separately),
again with $q=0.3$. We use the equivariant construction for the knockoff matrix for 
both methods. 

\begin{figure}[t]\centering
\includegraphics[width=\textwidth]{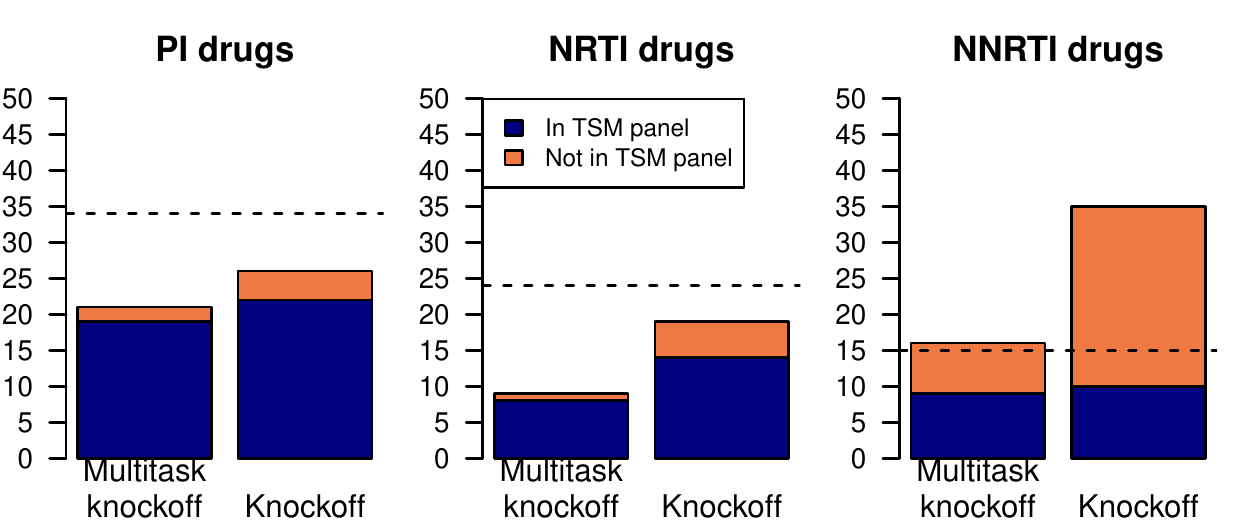}\vspace{-.2in}
\caption{Results on the HIV-1 drug resistance data set. For each drug class, we plot the number of protease positions (for PI) or
reverse transcriptase (RT) positions (for NRTI or NNRTI)  which were selected by the multitask knockoff or 
knockoff method. The color indicates whether or not the selected position appears in the treatment
selected mutation (TSM) panel, and the horizontal line shows the total number of positions
on the TSM panel.}
\label{fig:hiv_results}
\end{figure}

\subsection{Results}
We report our results by comparing the discovered mutations, within each drug class, against
the treatment-selected mutation (TSM) panel \cite{rhee2005hiv}, which gives mutations associated
with treatment by a drug from that class. 
As in \cite{barber2015} we report the counts by position rather than by mutation, i.e.~combinining
all mutations discovered at a single position, since multiple mutations at 
the same position are likely to have related effects.
To compare with the  knockoff method, for each drug class we consider mutation $j$ to be 
a discovery for that drug class, if it was selected for any of the drugs in that class.
The results are displayed in \figref{hiv_results}. In this experiment, we see that the multitask knockoff
has somewhat fewer discoveries than the  knockoff, but seems to show better
agreement with the TSM panel. As in the multitask simulation, this may be due to the fact that the  knockoff
combines discoveries across several drugs; a low false discovery proportion for each drug
individually can still lead to a high false discovery proportion once the results are combined.

\section{Discussion}
We have presented a knockoff filter for the group sparse regression and multitask
regression problems, where sharing information within each group or across
the set of response variables allows for a more powerful feature selection method.
Extending the knockoff framework to other structured estimation problems,
such as non-linear regression or to low-dimensional latent structure other than sparsity,
would be interesting directions for future work.

\bibliographystyle{imsart-nameyear}
\bibliography{bib}


\end{document}